\newcommand*{\mailto}[1]{\href{mailto:#1}{\nolinkurl{#1}}}
\newcommand{\hsp}[1]{\hspace*{#1mm}}
\numberwithin{equation}{section}
\theoremstyle{plain}
\newtheorem{thm}{Theorem}[section]
\newtheorem{cor}[thm]{Corollary}
\newtheorem{lem}[thm]{Lemma}
\theoremstyle{remark}
\newtheorem*{rem*}{Remark}
\newtheorem*{rems*}{Remarks}
\theoremstyle{definition}
\newtheorem*{assus*}{Assumptions}
\newtheorem*{Def*}{Definition}
\newtheorem*{not*}{Notation}
\providecommand{\C}{\mathcal}
\providecommand{\D}{\mathbb}
\newcommand{\full}{\mathrm{F}}
\newcommand{\loc}{\mathrm{loc}}
\newcommand{\unif}{\mathrm{unif}}
\newcommand{\weg}{\mathrm{W}}
\providecommand{\abs}[1]{\lvert#1\rvert}
\providecommand{\accol}[1]{\lbrace#1\rbrace}
\providecommand{\croch}[1]{\lbrack#1\rbrack}
\providecommand{\norm}[1]{\lVert#1\rVert}
\providecommand{\scal}[1]{\langle#1\rangle}
\DeclareMathOperator{\comp}{\textup{\textsf{c}}}
\DeclareMathOperator{\dist}{dist}
\DeclareMathOperator*{\essup}{ess\,sup}
\DeclareMathOperator{\expect}{\D{E}}
\DeclareMathOperator{\prob}{\D{P}}
\DeclareMathOperator{\ran}{Ran}
\DeclareMathOperator{\spec}{\sigma}
\DeclareMathOperator{\supp}{supp}
\DeclareMathOperator{\tr}{tr}
\newcommand{\Hmm}[1]{\leavevmode{\marginpar{\tiny%
$\hbox to 0mm{\hspace*{-0.5mm}$\leftarrow$\hss}%
\vcenter{\vrule depth 0.1mm height 0.1mm width \the\marginparwidth}%
\hbox to 0mm{\hss$\rightarrow$\hspace*{-0.5mm}}$\\\relax\raggedright #1}}}
\begin{document}

\title[Localization near fluctuation boundaries]{An uncertainty
principle, Wegner estimates and localization near fluctuation
boundaries}
\author[A. Boutet de Monvel]{Anne Boutet de Monvel$^{\star}$}
\address{$^{\star}$Institut de Math\'ematiques de Jussieu,
Universit\'e Paris Diderot Paris 7\\
175 rue du Chevaleret\\ 75013 Paris \\ France}
\email{\mailto{$^{\star}$aboutet@math.jussieu.fr}}
\urladdr{\url{http://www.math.jussieu.fr/~aboutet/}}
\author[D. Lenz]{Daniel Lenz$^{\dagger}$}
\address{$^{\dagger}$Mathematisches Institut,
Friedrich-Schiller-Universit\"at Jena, 07737 Jena, Germany}
\email{\mailto{daniel.lenz@uni-jena.de}}
\author[P. Stollmann]{Peter Stollmann$^{\ddagger}$}
\address{$^{\ddagger}$Fakult\"at f\"ur
   Mathematik, Technische Universit\"at, 09107 Chemnitz, Germany}
\email{\mailto{peter.stollmann@mathematik.tu-chemnitz.de}}
\date{21.4.2009}
\begin{abstract}
We prove a simple uncertainty principle and show that it can be
applied to prove Wegner estimates near fluctuation boundaries. This
gives new classes of models for which localization at low energies
can be proven.
\end{abstract}
\maketitle
\section*{Introduction}

Starting point of the present paper was the lamentable fact that for
certain random models with possibly quite small and irregular support
there was a proof of localization via fractional moment techniques
(at least for $d\leq 3$) but no proof of Wegner estimates necessary
for multiscale analysis. The classes of models include models with
surface type random potentials as well as Anderson models with
displacement (see \cite{BNSS-06}) but actually much more classes of
examples could be seen in the framework established there which was
labelled ``fluctuation boundaries''. Actually, the big issue in the
treatment of random perturbations with small or irregular support is
the question, whether the spectrum at low energies really feels the
random perturbation. This is exactly what is formalized in the
fluctuation boundary framework.

In the present paper we stablish the necessary Wegner estimates by
using the method from Combes, Hislop, and Klopp \cite{CHK-07} so that
we get the correct volume factor and the modulus of continuity of the
random variables. One of the main ideas we borrow from the last
mentioned work is to show that spectral projectors are ``spread
out'', a property we call ``uncertainty principle''.

The solution to the above mentioned problem is now quite simple in
fact. In an abstract framework we show that such an uncertainty
principle of the form
\begin{equation}          \label{eq.intro.uncertainty.principle}
P_I(H_0)WP_I(H_0)\geq\kappa P_I(H_0),
\end{equation}
where $W\geq 0$ is bounded and $P_I(H_0)$ denotes the spectral
projection, is in a sense equivalent to the mobility of
\begin{equation}          \label{eq.inf.perturb.spectrum}
\lambda(t):=\inf\spec(H_0+tW).
\end{equation}
This is done in Section 1.

That fits perfectly with the fluctuation boundary concept and gives
the appropriate Wegner estimates. Actually, if the integrated density
of states exists, it then must be continuous, provided the
distribution of the random variables has a common modulus of
continuity. We will prove this in Section 2.

Finally, in Section 3 we show how to exploit these Wegner estimates
for a proof of localization. It lies in the nature of these different
methods that we thus get localization under less restrictive
conditions than what was needed in \cite{BNSS-06}. One main point is
the dimension restriction of the latter paper, $d\leq 3$, which
certainly is not essential but is essential for a proof of digestable
length. Clearly, the estimates one gets via the fractional moment
method are more powerful.

\subsection*{Acknowledgement}
Fruitful discussions with G.~Stolz are gratefully acknowledged. Part
of this work was done at a stay of D.L. and P.S. at Paris, financial
support by the DFG (German Science Foundation) and the University
Paris Diderot Paris 7 are gratefully acknowledged.

\section{An uncertainty principle and mobility of the ground state
energy}                   \label{sec:uncertainty.principle}

In this section we fix a rather abstract setting:
$\C{H}$ is a Hilbert space,
$H_0$ is a selfadjoint operator in $\C{H}$ with
\begin{equation}
\lambda(0):=\inf\spec(H_0)>-\infty.
\end{equation}
Moreover, $W$ is assumed to be bounded and nonnegative.

The uncertainty principle we want to study is the existence of a
positive $\kappa$ such that
\begin{equation}      \label{eq:uncertainty}
P_IWP_I\geq\kappa P_I\tag{$\star$}
\end{equation}
where $I\subset\D{R}$ is some compact interval, $I=\croch{\min I,\max
I}$ and $P_I=P_I(H_0)=\chi_I(H_0)$ is the corresponding spectral
projection.

It is reasonable to call \eqref{eq:uncertainty} an uncertainty
principle as a state in the range of $P_I$ cannot be ``concentrated
where $W$ vanishes''.

In our main application, $H_0$ will be a Schr\"odinger operator so
that \eqref{eq:uncertainty} is in fact a variant of the usual
uncertainty principle, at least for $H_0=-\Delta$.

The use of \eqref{eq:uncertainty} for the proof of Wegner estimates
is due to Combes, Hislop, and Klopp, see \cite{CHK-03,CHK-07}. Its
importance lies in the fact that it takes care of random potentials
with small support. Our purpose here is to prove a simple criterion
that implies \eqref{eq:uncertainty} and can be checked rather easily.

\begin{thm}                \label{thm.uncertainty.principle}
Let for $t\geq 0$ 
\begin{equation}
\lambda(t):=\inf_{}\spec(H_0+tW)
\end{equation}
and assume that $\lambda(t_0)>\max I$ for some $t_0>0$. Then
\begin{equation}
P_IWP_I\geq\biggl\lbrack\sup_{t>0}\frac{\lambda(t)-\max I}{t}\biggr\rbrack P_I.
\end{equation}
\end{thm}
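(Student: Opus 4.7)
The plan is quite short and direct: the assumption $\lambda(t)=\inf\spec(H_0+tW)$ is a lower bound for the quadratic form of $H_0+tW$, so by rearranging and sandwiching with $P_I$ we obtain a lower bound for $tP_IWP_I$ in terms of $\lambda(t)-\max I$. Dividing by $t$ and taking the supremum yields the stated inequality. The role of the hypothesis $\lambda(t_0)>\max I$ is merely to ensure that the supremum on the right-hand side is strictly positive, so that the conclusion is nontrivial (i.e.\ really an uncertainty principle~\eqref{eq:uncertainty}); the operator inequality itself makes sense for every $t>0$.

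In more detail, the first step is to note that since $\lambda(t)$ is the bottom of the spectrum of the selfadjoint operator $H_0+tW$, one has
\begin{equation*}
\scal{\psi,(H_0+tW)\psi}\geq\lambda(t)\norm{\psi}^2
\end{equation*}
for every $\psi$ in the form domain of $H_0$. The range of $P_I$ is contained in the domain of $H_0$ because $I$ is a bounded interval, so the choice $\psi=P_I\varphi$ with $\varphi\in\C{H}$ is legitimate and yields
\begin{equation*}
\scal{\varphi,P_I(H_0+tW)P_I\varphi}\geq\lambda(t)\scal{\varphi,P_I\varphi}.
\end{equation*}

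The second step is to dispose of the $H_0$-term by functional calculus. Since $P_I$ and $H_0$ commute and $P_I$ projects onto the spectral subspace of $H_0$ corresponding to $I$, one has the operator inequality $P_IH_0P_I\leq(\max I)\,P_I$. Combining this with the previous display gives
\begin{equation*}
t\,P_IWP_I\geq\bigl(\lambda(t)-\max I\bigr)P_I
\end{equation*}
as an operator inequality on all of $\C{H}$, valid for every $t>0$. Dividing by $t$ and taking the supremum over $t>0$ yields the claim.

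There is no real obstacle: the only care needed is to make sure the sandwich $P_I(H_0+tW)P_I$ is interpreted correctly, which is guaranteed by the boundedness of $W$ and the fact that $P_I\C{H}\subset D(H_0)$. The hypothesis that $\lambda(t_0)>\max I$ for some $t_0>0$ enters only through its consequence that the supremum $\sup_{t>0}\frac{\lambda(t)-\max I}{t}$ is strictly positive; no monotonicity or concavity of $t\mapsto\lambda(t)$ is required for the inequality itself.
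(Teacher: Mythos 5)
Your proof is correct and relies on exactly the same two ingredients as the paper's: the variational bound $\scal{\psi,(H_0+tW)\psi}\geq\lambda(t)\norm{\psi}^2$ and the functional-calculus estimate $P_IH_0P_I\leq(\max I)P_I$. The only difference is logical packaging: the paper runs the argument by contraposition (assume $(\star)$ fails for some $\kappa$, produce a normalized witness $g\in\ran P_I$, derive $\kappa>\frac{\lambda(t)-\max I}{t}$), whereas you establish the operator inequality directly and then divide by $t$; the direct version is, if anything, a touch cleaner.
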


Of course, the assumption in the theorem is merely there to guarantee
that the square bracket is positive!

\begin{proof}
Assume that \eqref{eq:uncertainty} does not hold for some $\kappa>0$.
Then we find $g\in\ran P_I$ with $\norm{g}=1$ and
\[
\scal{Wg,g}=\scal{P_IWP_Ig,g}<\kappa.
\]
Since $\scal{H_0g,g}\leq\max I$, by the functional calculus, we get,
for any $t>0$,
\[
\lambda(t)\leq\scal{(H_0+tW)g,g}<\max I+t\,\kappa,
\]
which implies
\[
\kappa>\frac{\lambda(t)-\max I}{t}\, .
\]
By contraposition, we get the assertion.
\end{proof}

\begin{rems*}
\begin{enumerate}[(1)]
\item
One particularly nice aspect of the above result is that the
important constant is controlled in a simple way.
\item
Once the ground state energy is pushed up by $W$ we get an
uncertainty principle \eqref{eq:uncertainty} at least for intervals
$I$ near $\lambda(0)$.
\item
The corresponding uncertainty result in \cite{CHK-03} for periodic
Schr\"odinger operators does not follow from the preceding theorem.
\end{enumerate}
\end{rems*}

There is a kind of converse to Theorem \ref{thm.uncertainty.principle}.

\begin{lem}                \label{lem.uncertainty.principle}
If \eqref{eq:uncertainty} holds for $I$ with $\min I=\lambda(0)=\inf \sigma (H_0)$ and
$\max I>\min I$, then
\[
\lambda(t)>\lambda(0)
\]
for all $t>0$.
\end{lem}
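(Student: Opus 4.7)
The plan is to argue by contraposition: I would assume $\lambda(t_1)=\lambda(0)$ for some $t_1>0$ and show that $(\star)$ must then fail for every $\kappa>0$. Since $W\geq 0$ makes $t\mapsto\lambda(t)$ monotone nondecreasing with $\lambda(t)\geq\lambda(0)$, this covers the only way the conclusion can break down.

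First I would invoke the variational characterisation of $\lambda(t_1)=\inf\spec(H_0+t_1W)$ to pick a sequence of unit vectors $\psi_n\in\C{H}$ with $\scal{(H_0+t_1W)\psi_n,\psi_n}\to\lambda(0)$. Since $H_0-\lambda(0)\geq 0$ and $t_1W\geq 0$, the two nonnegative quantities $\scal{(H_0-\lambda(0))\psi_n,\psi_n}$ and $t_1\scal{W\psi_n,\psi_n}$ sum to a sequence tending to zero; hence each of them tends to zero separately.

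Next I would show that $\psi_n$ is asymptotically in $\ran P_I$. Writing $\phi_n=P_I\psi_n$ and exploiting $I=\croch{\lambda(0),\max I}$, the functional calculus for $H_0$ yields
\[
\scal{(H_0-\lambda(0))\psi_n,\psi_n}\geq(\max I-\lambda(0))\,\norm{(1-P_I)\psi_n}^2,
\]
and the assumption $\max I>\min I=\lambda(0)$ then forces $\norm{(1-P_I)\psi_n}\to 0$ and $\norm{\phi_n}\to 1$.

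Finally, if $(\star)$ did hold for some $\kappa>0$, the inequality $P_IWP_I\geq\kappa P_I$ would give $\norm{W^{1/2}\phi_n}^2=\scal{WP_I\psi_n,P_I\psi_n}\geq\kappa\norm{\phi_n}^2\to\kappa>0$. On the other hand, the triangle inequality for the seminorm $v\mapsto\norm{W^{1/2}v}$ together with the boundedness of $W$ gives
\[
\norm{W^{1/2}\phi_n}\leq\norm{W^{1/2}\psi_n}+\sqrt{\norm{W}}\,\norm{(1-P_I)\psi_n}\to 0,
\]
the desired contradiction. The main obstacle I anticipate is exactly this transfer from an approximate ground state of the perturbed operator $H_0+t_1W$ to the spectral subspace $\ran P_I$ of the unperturbed operator, on which $W$ can be controlled via $(\star)$; the two hypotheses $\max I>\min I$ and boundedness of $W$ enter precisely to make these quantitative estimates compatible.
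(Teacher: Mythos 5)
Your proof is correct, and it takes a genuinely different route from the paper's. The paper argues \emph{directly}: it decomposes an arbitrary normalized $f\in D(H_0)$ as $f_1=P_If$, $f_2=P_{I^{\comp}}f$, lower-bounds $\scal{(H_0+tW)f,f}$ term by term (using $(\star)$ on the $f_1$-part and the Cauchy--Schwarz estimate $-2t\norm{W}\norm{f_1}\norm{f_2}$ on the cross terms), and then observes that the resulting expression $\lambda(0)+(\max I-\lambda(0))\norm{f_2}^2-2t\norm{W}\norm{f_1}\norm{f_2}+t\kappa\norm{f_1}^2$ exceeds $\lambda(0)$ uniformly over the unit sphere for small $t$, because the associated $2\times2$ quadratic form is positive definite once $t<\kappa(\max I-\lambda(0))/\norm{W}^2$; monotonicity of $\lambda$ then handles large $t$. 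You instead argue by \emph{contradiction} via an approximate-minimizer sequence: if $\lambda(t_1)=\lambda(0)$, you split $\scal{(H_0+t_1W-\lambda(0))\psi_n,\psi_n}\to 0$ into two nonnegative parts, use the spectral theorem above $\max I$ to force $\psi_n$ asymptotically into $\ran P_I$, and then play the $(\star)$-lower bound on $\norm{W^{1/2}P_I\psi_n}$ against the fact that $\norm{W^{1/2}\psi_n}\to 0$. Both arguments are sound. The paper's direct computation has the mild advantage of yielding (implicitly) a quantitative lower bound on $\lambda(t)-\lambda(0)$ for small $t$, while yours is arguably more transparent — it avoids the two-variable quadratic estimate that the paper hides behind a ``knowing smile'' — and it is pleasingly symmetric with the proof of Theorem~\ref{thm.uncertainty.principle}, which also runs by contraposition through a near-minimizing state.
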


\begin{proof}
We only need to consider small $t>0$ since $W\geq 0$. For $f\in
D(H_0)$, $\norm{f}=1$, let $f_1:=P_If$ and $f_2:=P_{I^{\comp}}f$ so
that $\norm{f_1}^2+\norm{f_2}^2=1$. We consider
\begin{align*}
\scal{(H_0+tW)f,f}
&=\scal{H_0f_1,f_1}+\scal{H_0f_2,f_2}+t\scal{Wf,f}\\
&\geq(\max I)\norm{f_2}^2+\lambda(0)\norm{f_1}^2+t\kappa\,\norm{f_1}^2-2t\,\norm{W}\,\norm{f_1}\,\norm{f_2}\\
&\geq\lambda(0)\norm{f}^2+(\max
I-\lambda(0))\norm{f_2}^2-2t\,\norm{f_1}\,\norm{f_2}\,\norm{W}+t\kappa\,\norm{f_1}^2.
\end{align*}
A knowing smile at the last quadratic (!) expression in $t$ reveals
that it is strictly larger than $\lambda(0)$ for $t$ small enough.
\end{proof}

\section{Continuity of the IDS near weak fluctuation boundaries}
\label{sec:ids.weak.fluctu}

The main result here is, in fact, rather an ``optimal'' Wegner
estimate meaning that we recover at least the modulus of continuity
of the random variables in the Wegner estimate as well as the correct
volume factor. The models we consider needn't have a homogeneous
background so that the integrated density of states, IDS need not
exist. See \cite{Ves-08} for a recent survey on how to prove the existence of the IDS in various different settings.  We show that a straightforward application of Theorem
\ref{thm.uncertainty.principle} above gives the necessary input to
perform the analysis of \cite{CHK-07} in a rather general setting
which we are going to introduce now.
\begin{enumerate}[{(A}1)]
\item
The background potential $V_0\in L_{\loc,\unif}^p(\D{R}^d)$ with
$p=2$ if $d\leq 3$, and $p>\frac{d}{2}$ if $d>3$.
\item
The set $\C{I}\subset\D{R}^d$, where the random impurities are
located, is uniformly discrete, i.e.,
\[
\inf_{\substack{\alpha,\beta\in\C{I}\\\alpha\neq\beta}}\abs{\alpha-\beta}=:r_{\C{I}}>0.
\]
\end{enumerate}
\begin{enumerate}[\hsp{8}]
\item[$(\widetilde{\mathrm{A}3})$]
For the probability measure $\prob$ on
$\Omega=\prod_{\alpha\in\C{I}}\croch{0,\eta_{\max}}$ we use conditional probabilities to define the 
following uniform bound
\[
s(\varepsilon)=\sup_{\alpha}\essup_{E\in\D{R}}\essup_{(\omega_{\beta})_{\beta\neq\alpha}}\prob\bigl\lbrace\omega_\alpha\in\croch{E,E+\varepsilon}\mid(\omega_{\beta})_{\beta\neq\alpha}\bigr\rbrace.
\]
\item[$(\widetilde{\mathrm{A}4})$]
Let $E_0:=\inf\spec(H_0)$ and let
\[
H_{\full}:=H_0+\eta_{\max}\sum_{\alpha\in\C{I}}U_{\alpha}
\]
the subscript $\full$ standing for ``full coupling''.
The single site potentials $U_{\alpha}$, $\alpha\in\C{I}$ are
measurable functions on $\D{R}^d$ that satisfy
\[
c_U\chi_{\Lambda_{r_U}(\alpha)}\leq U_{\alpha}\leq
C_U\chi_{\Lambda_{R_U}(\alpha)}
\]
for all $\alpha\in\C{I}$, with $c_U$, $C_U$, $r_U$, $R_U>0$
independent of $\alpha$. Here, $\Lambda_s (\beta)$ denotes the box with sidelength $2 s$ and center $\beta$.  
\[
V_{\omega}(x)=\sum_{\alpha\in\C{I}}\omega_{\alpha}U_{\alpha}(x)
\]
and
\[
H:=H(\omega):=H_0+V_{\omega}\text{ in }L^2(\D{R}^d).
\]
Assume that $E_0$ is a \emph{weak fluctuation boundary} in the sense
that $E_{\full}:=\inf\spec(H_{\full})>E_0$.
\end{enumerate}

\begin{rems*}
\begin{enumerate}[(1)]
\item
In \cite{BNSS-06} (A3) and (A4) are stronger than their counterparts
$(\widetilde{\mathrm{A}3})$ (which actually isn't an assumption at
all) and $(\widetilde{\mathrm{A}4})$ here.
\item
The modulus of continuity $s(\,\cdot\,)$ from (A3) also appears in
\cite{CHK-07}, where, however, the variables appearing in the
conditional probabilities are not displayed correctly. For a detailed
discussion of regular conditional probabilities see, e.g.,
\cite{Kln-08}.
\end{enumerate}
\end{rems*}

Wegner estimates, named after Wegner's original work \cite{Weg-81}, are an important tool in
random operator theory. They give a bound on the probability that the
eigenvalues of a local Hamiltonian come close to a given energy.
For a list of some recent papers, see \cite{Chulaevsky-07, Chulaevsky-08, ChulaevskyS-08,CHK-03, CHK-07, HundertmarkKNSV-06, Krish-07} and the account in the recent Lecture Notes Volume \cite{Ves-08}.
We
consider a box $\Lambda\subset\D{R}^d$ and denote by
$H_{\Lambda}(\omega)$ the restriction of $H(\omega)$ to
$L^2(\Lambda)$ with Dirichlet boundary conditions and with
$H_0^{\Lambda}$ the restriction of $H_0$ to $L^2(\Lambda)$ with
Dirichlet boundary conditions. Here comes our main application of
Theorem \ref{thm.uncertainty.principle}:

\begin{thm}                \label{thm.wegner}
Assume \emph{(A1)-(A2)} and $(\widetilde{\mathrm{A}3})$-$(\widetilde{\mathrm{A}4})$.
Then, for every $\delta>0$ there exists a constant
$C_{\weg}=C_{\weg}(\delta)$ such that for any interval
$I=\croch{E_0,E_{\full}-\delta}$ we have:
\begin{align*}
\prob\bigl\lbrace\spec\bigl(H_{\Lambda}(\omega)\bigr)\cap
I\neq\varnothing\bigr\rbrace
&\leq\expect\bigl\lbrace\tr P_I\bigl(H_{\Lambda}(\omega)\bigr)\bigr\rbrace\\
&\leq C_{\weg}\cdot\abs{\Lambda}\cdot s(\varepsilon).
\end{align*}
\end{thm}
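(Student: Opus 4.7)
The plan is to follow the Combes--Hislop--Klopp framework \cite{CHK-07}, which splits the Wegner estimate into an uncertainty principle for the box Hamiltonian plus a single-site spectral averaging step. The first task is to transfer Theorem~\ref{thm.uncertainty.principle} from the whole-space setting to $H_\Lambda(\omega)$. Fix a subinterval $I\subseteq\croch{E_0,E_{\full}-\delta}$ and set $W_\Lambda:=\sum_{\alpha}U_\alpha$, the sum running over those $\alpha\in\C{I}$ whose $U_\alpha$ is not identically zero on $\Lambda$; by (A2) together with $(\widetilde{\mathrm{A}4})$ this defines a bounded nonnegative function. Because each $\omega_\alpha\geq 0$, for any $f\in L^2(\Lambda)$ extended by zero to $\D{R}^d$, Dirichlet bracketing combined with $(\widetilde{\mathrm{A}4})$ gives
\[
\scal{(H_\Lambda(\omega)+\eta_{\max}W_\Lambda)f,f}\geq\scal{H_{\full}f,f}_{L^2(\D{R}^d)}\geq E_{\full}\norm{f}^2.
\]
Thus $\inf\spec(H_\Lambda(\omega)+\eta_{\max}W_\Lambda)\geq E_{\full}>\max I$, and Theorem~\ref{thm.uncertainty.principle} applied to the pair $(H_\Lambda(\omega),W_\Lambda)$ yields
\[
P_I(H_\Lambda(\omega))\,W_\Lambda\,P_I(H_\Lambda(\omega))\geq\frac{\delta}{\eta_{\max}}\,P_I(H_\Lambda(\omega)),
\]
uniformly in $\omega$ and in $\Lambda$.

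Taking traces and expanding $W_\Lambda=\sum_\alpha U_\alpha$, one obtains
\[
\tr P_I(H_\Lambda(\omega))\leq\frac{\eta_{\max}}{\delta}\sum_{\alpha}\tr\bigl(U_\alpha^{1/2}P_I(H_\Lambda(\omega))U_\alpha^{1/2}\bigr).
\]
For each $\alpha$ I condition on $(\omega_\beta)_{\beta\neq\alpha}$. The resulting one-parameter family $t\mapsto H_\Lambda(\omega)$ with $\omega_\alpha=t$ is a monotone nondecreasing perturbation by $tU_\alpha$, to which I apply a Stone formula plus spectral shift argument along the lines of \cite{CHK-07}, using the two-sided control of $U_\alpha$ furnished by $(\widetilde{\mathrm{A}4})$. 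This yields the single-site averaging bound
\[
\expect\bigl\lbrack\tr(U_\alpha^{1/2}P_I(H_\Lambda(\omega))U_\alpha^{1/2})\mid(\omega_\beta)_{\beta\neq\alpha}\bigr\rbrack\leq C\,s(\abs{I}),
\]
with $C$ depending only on $c_U,C_U,r_U,R_U$. Summing over $\alpha$, whose number is $\ord(\abs{\Lambda})$ by the uniform discreteness (A2), taking unconditional expectation, and combining with the Chebyshev-type bound
\[
\prob\accol{\spec(H_\Lambda(\omega))\cap I\neq\varnothing}\leq\expect\accol{\tr P_I(H_\Lambda(\omega))}
\]
produces the claimed estimate with $C_{\weg}(\delta)$ proportional to $\eta_{\max}/\delta$.

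The main obstacle, I expect, lies in the single-site spectral averaging step: one must make the modulus of continuity $s$ appear (rather than Lebesgue measure), even though the conditional distribution of $\omega_\alpha$ need not be absolutely continuous. This is precisely where the careful regular conditional probability formulation $(\widetilde{\mathrm{A}3})$ enters, and where, as flagged in the Remark following the assumptions, the treatment in \cite{CHK-07} needs correction. A secondary technicality concerns Dirichlet bracketing at $\partial\Lambda$: single-site potentials whose indices lie near the boundary may have support partly outside $\Lambda$, but this is absorbed into the definition of $W_\Lambda$ as above, at the price only of a harmless lattice-dependent constant in the final count.
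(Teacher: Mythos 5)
The paper's proof is deliberately short: it verifies the hypothesis of \cite[Theorem 1.3]{CHK-07}, namely the uncertainty inequality for the \emph{unperturbed} box operator,
\[
P_I(H_0^\Lambda)\,W_\Lambda\,P_I(H_0^\Lambda)\geq\kappa\,P_I(H_0^\Lambda),\qquad\kappa=\frac{\delta}{\eta_{\max}},
\]
via Dirichlet bracketing $H_0^\Lambda\geq H_0$ and the fluctuation boundary condition, and then invokes the entire CHK-07 machinery (including the single-site spectral averaging with the modulus of continuity $s$ and the correct volume factor). Your proposal is genuinely different: you apply Theorem~\ref{thm.uncertainty.principle} to the \emph{random} box operator $H_\Lambda(\omega)$ rather than to $H_0^\Lambda$. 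That inequality is correct for the same reason (adding the nonnegative $V_\omega^\Lambda$ only helps), but it is not the hypothesis CHK-07 requires, so you cannot then simply ``let the proof of CHK-07 take over''; you are in fact committing yourself to redoing the hard part of that paper.

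That hard part is exactly the step you acknowledge as ``the main obstacle'': the single-site averaging estimate
\[
\expect\bigl[\tr\bigl(U_\alpha^{1/2}P_I(H_\Lambda(\omega))U_\alpha^{1/2}\bigr)\mid(\omega_\beta)_{\beta\neq\alpha}\bigr]\leq C\,s(\abs{I}).
\]
This does not follow from ``a Stone formula plus spectral shift argument'' once the conditional distribution is singular: the standard spectral averaging arguments produce Lebesgue measure $\abs{I}$, and to convert to $s(\abs{I})$ one needs quantitative Lipschitz-type control of eigenvalues in the single coupling $\omega_\alpha$. Your uncertainty bound for $H_\Lambda(\omega)$ supplies only $\scal{\psi,W_\Lambda\psi}\geq\kappa$ for $\psi\in\ran P_I(H_\Lambda(\omega))$, i.e.\ control of $\sum_\alpha\scal{\psi,U_\alpha\psi}$, not of the \emph{individual} terms $\scal{\psi,U_\alpha\psi}$; hence it does not give a lower bound on $\partial_{\omega_\alpha}E_j(\omega)$ and the translation from the $\omega_\alpha$-variable to the energy variable stalls. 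CHK-07's strategy sidesteps exactly this by comparing to the reference projection $P_I(H_0^\Lambda)$, and that is why the paper takes care to establish the uncertainty for $H_0^\Lambda$ rather than for the perturbed operator. In short: your opening inequality is fine but aimed at the wrong operator, and the averaging step you sketch is precisely the content of \cite{CHK-07} that the paper imports rather than reproves; as written your argument has a gap there.
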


Clearly, any application will need some further assumptions on
$s(\,\cdot\,)$ for which we a priori just know that $0\leq
s(\varepsilon)\leq 1$ for all $\varepsilon>0$.

\begin{proof}
We rely on the analysis from \cite{CHK-07}. The main point is to find
an estimate
\begin{equation}   \label{eq:stars}
P_I(H_0^{\Lambda})W_{\Lambda}P_I(H_0^{\Lambda})\geq\kappa
P_I(H_0^{\Lambda})\tag{$\star\star$}
\end{equation}
with a constant $\kappa$ independent of $\Lambda$ and $I$ (as long as
$I\subset\croch{E_0,E_{\full}-\delta}$), and 
\[
W_{\Lambda}:=\Bigl(\sum_{\alpha\in\C{I}}U_{\alpha}\Bigr)\cdot\chi_{\Lambda}.
\]
Once \eqref{eq:stars} is established, the proof of \cite[Theorem
1.3]{CHK-07} takes over, with minor modifications of notation.

But \eqref{eq:stars} follows easily from Theorem
\ref{thm.uncertainty.principle} and $(\widetilde{\mathrm{A}4})$: As
Dirichlet boundary conditions shift the spectrum up, for any
$t\geq\eta_{\max}$:
\[
\lambda(t)=\inf\spec(H_0^{\Lambda}+tW_{\Lambda})\geq\inf\spec(H_0+tW)\geq
E_{\full}.
\]
For $I\subset\croch{E_0,E_{\full}-\delta}$ we see that
\[
\lambda(\eta_{\max})-\max I>\delta
\]
so that we get an uncertainty inequality with
\[
\kappa=\frac{\delta}{\eta_{\max}}\,.\qedhere
\]
\end{proof}

\begin{rem*}
We should point out that the input from \cite{CHK-07} is rather
substantial. While the uncertainty principle is important to deal
with possibly small support, there is also the issue of the correct
volume factor which is settled in \cite{CHK-07}.
\end{rem*}

Like in the latter paper, if we assume on top that the IDS
$N(\,\cdot\,)$ of the random operator $H$ exists, then the preceding
theorem implies that $N(\,\cdot\,)$ is as continuous as
$s(\,\cdot\,)$ is.

\begin{cor}                \label{cor.uncertainty.principle}
Assume \emph{(A1)-(A2)} and
\emph{$(\widetilde{\mathrm{A}3})$-$(\widetilde{\mathrm{A}4})$}, and,
additionally that the IDS $N(\,\cdot\,)$ of $H$ exists. Then there
exists a locally bounded $c_{\weg}(\,\cdot\,)$ on $\lbrack
E_0,E_{\full})$ such that
\[
N(E+\varepsilon)-N(E)\leq c_{\weg}(E)\cdot s(\varepsilon)
\]
for $\varepsilon$ small enough. In particular, $N(\,\cdot\,)$ is
continuous on $\lbrack E_0,E_{\full})$, whenever $s(\varepsilon)\to
0$ as $\varepsilon\to 0$.
\end{cor}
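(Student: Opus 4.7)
The strategy is to reduce the bound on the IDS directly to the finite-volume Wegner estimate of Theorem~\ref{thm.wegner}. Fix $E\in\lbrack E_0,E_{\full})$ and set $\delta_E:=(E_{\full}-E)/2>0$. For $\varepsilon\in(0,\delta_E)$ the interval $I_\varepsilon:=\croch{E,E+\varepsilon}$ is contained in $\croch{E_0,E_{\full}-\delta_E}$, so Theorem~\ref{thm.wegner} applies and yields
\[
\frac{1}{\abs{\Lambda}}\expect\bigl\lbrace\tr P_{I_\varepsilon}(H_\Lambda(\omega))\bigr\rbrace\leq C_{\weg}(\delta_E)\cdot s(\varepsilon)
\]
for every box $\Lambda$, with a constant $C_{\weg}(\delta_E)$ that depends on $E$ only through the distance $\delta_E$ to $E_{\full}$.

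Next I would invoke the assumed existence of the IDS: along the exhausting sequence of boxes used to define $N$,
\[
N(E')=\lim_{\Lambda\nearrow\D{R}^d}\frac{1}{\abs{\Lambda}}\expect\bigl\lbrace\tr P_{(-\infty,E']}(H_\Lambda(\omega))\bigr\rbrace
\]
at every continuity point $E'$ of $N$. Writing $P_{I_\varepsilon}$ as the difference of two such spectral projections and passing to the limit in the Wegner bound gives
\[
N(E+\varepsilon)-N(E)\leq C_{\weg}(\delta_E)\cdot s(\varepsilon),
\]
first at continuity points and then, by monotonicity and right-continuity of $N$, at every $E\in\lbrack E_0,E_{\full})$ with $\varepsilon<\delta_E$. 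Setting $c_{\weg}(E):=C_{\weg}(\delta_E)$ proves the main estimate; since $\delta_E$ is bounded below on every compact subset of $\lbrack E_0,E_{\full})$, the function $c_{\weg}(\,\cdot\,)$ is automatically locally bounded there.

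For the final assertion, the hypothesis $s(\varepsilon)\to 0$ as $\varepsilon\to 0$ combined with the above bound forces $N(E+\varepsilon)-N(E)\to 0$ as $\varepsilon\to 0^+$, which together with monotonicity of $N$ yields continuity at every $E\in\lbrack E_0,E_{\full})$. The one delicate step I anticipate is the passage from the finite-volume trace bound to the infinite-volume difference $N(E+\varepsilon)-N(E)$ in the presence of possible jumps of $N$; this is handled in the standard way by working first at continuity points of $N$, which are dense by monotonicity, and then extending by limits.
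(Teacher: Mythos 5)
Your proposal is correct and follows exactly the route the paper has in mind: the paper states Corollary~\ref{cor.uncertainty.principle} without proof, as an immediate consequence of Theorem~\ref{thm.wegner} ``like in \cite{CHK-07}'', and that deduction is precisely the normalized-trace/volume-limit argument you carry out. The one point worth tightening in your sketch is the extension from continuity points of $N$ to arbitrary $E$: to keep the right-hand side equal to $s(\varepsilon)$ rather than a one-sided limit $s(\varepsilon^{+})$, you should approximate $E$ from above by continuity points $a_n\downarrow E$ and pick continuity points $b_n\downarrow E+\varepsilon$ with $b_n\leq a_n+\varepsilon$, then use that $s$ is nondecreasing (immediate from its definition as a supremum of probabilities of intervals) so that $s(b_n-a_n)\leq s(\varepsilon)$; invoking right-continuity of $N$ alone is not quite enough. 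Also make sure the fixed $\delta_E$ is chosen so that $[a_n,b_n]\subset[E_0,E_{\full}-\delta_E]$ for $n$ large, which your choice $\delta_E=(E_{\full}-E)/2$ and $\varepsilon<\delta_E$ accommodates. These are exactly the kind of routine details the paper suppresses; the essential content of your argument matches the intended proof.
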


\section{Localization near fluctuation boundaries}   \label{sec:localization}

As mentioned already in the introduction, the validity of a Wegner
estimate was missing for a proof of localization via multiscale
analysis. Due to Theorem \ref{thm.wegner}, this is now resolved. The
assumptions we need to make now are weaker than what is found in
\cite{BNSS-06} but stronger than what we needed in the preceding
section.
\begin{enumerate}[\hsp{8}]
\item[$(\overline{\mathrm{A}3})$]
The random variables $\eta_{\alpha}\colon\Omega\to\D{R}$,
$\omega\mapsto\omega_{\alpha}$ are independent, supported in
$\croch{0,\eta_{\max}}$ and the modulus of continuity
\[
s(\varepsilon):=\sup_{\alpha\in
I}\sup_{E\in\D{R}}\prob\accol{\eta_{\alpha}\in\croch{E,E+\varepsilon}}
\]satisfies
\[
s(\varepsilon)\leq(-\ln\varepsilon)^{-\alpha}
\]
for some $\alpha>\frac{4d}{2-m}$, where $m\in(0,2)$ is as in
$(\overline{\mathrm{A}4})$.
\item[$(\overline{\mathrm{A}4})$]
Additionally to $(\widetilde{\mathrm{A}4})$ assume that there exists
$m\in(0,2)$ and $L^*$ such that for some $\xi>0$, all $L\geq L^*$ and
all $x\in\D{Z}^d$:
\[
\prob\bigl\lbrace\spec\bigl(H_{\Lambda_L(x)}(\omega)\bigr)\cap\croch{E_0,E_0+L^{-m}}\neq\varnothing\bigr\rbrace\leq
L^{-\xi}.
\]
\end{enumerate}

\begin{rem*}
Cleary, the assumptions (A1)-(A4) from \cite{BNSS-06} imply (A1)-(A2)
and $(\overline{\mathrm{A}3})$-$(\overline{\mathrm{A}4})$ so that the
localization result below extends the localization result from the
latter paper.
\end{rem*}

\begin{thm}                \label{thm.localization}
Assume \emph{(A1)-(A2)} and
\emph{$(\overline{\mathrm{A}3})$-$(\overline{\mathrm{A}4})$}. Then
there is a $\delta>0$ such that in $\croch{E_0,E_0+\delta}$ the
spectrum of $H(\omega)$ is pure point $\prob$-a.s. Moreover, for $p$
small enough and $\eta\in L^{\infty}$ with
$\supp\eta\subset\croch{E_0,E_0+\delta}$ it follows that
\[
\expect\bigl\lbrace\bigl\lVert\abs{X}^p\eta\bigl(H(\omega)\bigr)\cdot\chi_K\bigr\rVert\bigr\rbrace<\infty
\]
for every compact $K\subset\D{R}^d$.
\end{thm}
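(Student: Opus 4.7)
The plan is to derive the localization result by running a multiscale analysis (MSA) in the spirit of von Dreifus--Klein and Germinet--Klein, now that Theorem~\ref{thm.wegner} supplies the missing Wegner estimate. The two classical ingredients of MSA are an initial length scale estimate, which is provided directly by $(\overline{\mathrm{A}4})$, and a Wegner estimate, which comes from Theorem~\ref{thm.wegner} under the present assumptions. The remaining task is to verify that the quantitative parameters in $(\overline{\mathrm{A}3})$ and $(\overline{\mathrm{A}4})$ are compatible so that the MSA induction closes, and then to extract both the spectral and the dynamical conclusion.

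Concretely, I would first fix $\delta>0$ so small that $\croch{E_0,E_0+\delta}\subset\croch{E_0,E_{\full}-\delta'}$ for some $\delta'>0$. Theorem~\ref{thm.wegner} then yields, for every box $\Lambda=\Lambda_L(x)$ and every interval $I\subset\croch{E_0,E_0+\delta}$ of length $\varepsilon$,
\[
\prob\bigl\lbrace\spec\bigl(H_{\Lambda}(\omega)\bigr)\cap I\neq\varnothing\bigr\rbrace\leq C_{\weg}\,L^{d}\,s(\varepsilon).
\]
Combined with the log-H\"older bound $s(\varepsilon)\leq(-\ln\varepsilon)^{-\alpha}$ from $(\overline{\mathrm{A}3})$, applied at the shrinking window $\croch{E_0,E_0+L^{-m}}$, this produces a Wegner probability of order $C_{\weg}\,L^{d}(m\ln L)^{-\alpha}$.

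Next I would run the standard MSA induction on a sequence of scales $L_{k+1}=L_k^{\gamma}$ with $\gamma>1$ suitably chosen, and prove inductively that, with probability at least $1-L_k^{-2p}$ for some $p>d$, the box $\Lambda_{L_k}(x)$ is \emph{good} in the sense that $H_{\Lambda_{L_k}}(\omega)$ has an exponentially decaying Green's function at energies in $\croch{E_0,E_0+L_k^{-m}}$. The induction step combines the geometric resolvent inequality, the Wegner estimate above to exclude resonances between neighbouring sub-boxes, and the independence in $(\overline{\mathrm{A}3})$ via the standard two-boxes argument. The main obstacle is exactly what this paper is built to overcome: the Wegner probability only decays logarithmically in $L_k$, not polynomially. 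The assumption $\alpha>\frac{4d}{2-m}$ is tuned precisely so that after inserting $\varepsilon=L_k^{-m}$ into the log-H\"older estimate, the resulting $L_k^{d}(\ln L_k)^{-\alpha}$ can be absorbed into the required probability bound $L_{k+1}^{-2p}=L_k^{-2p\gamma}$ for admissible choices of $\gamma$ and $p$ at every step.

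Once the MSA conclusion is available, pure point spectrum in $\croch{E_0,E_0+\delta}$ almost surely follows from the classical argument that polynomially bounded generalized eigenfunctions decay exponentially. The moment bound
\[
\expect\bigl\lbrace\bigl\lVert\abs{X}^p\eta\bigl(H(\omega)\bigr)\chi_K\bigr\rVert\bigr\rbrace<\infty
\]
is then obtained from a Germinet--Klein type SUDEC estimate: the MSA output provides summable decay of eigenfunction correlators between $\chi_K$ and the characteristic functions of annuli at distance $R$ from $K$, which, for $p$ small enough and any $\eta\in L^{\infty}$ supported in $\croch{E_0,E_0+\delta}$, yields finiteness of the weighted norm.
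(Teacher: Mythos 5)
Your proposal correctly identifies the two MSA ingredients (initial scale estimate from $(\overline{\mathrm{A}4})$, Wegner estimate from Theorem~\ref{thm.wegner}) and the overall strategy of verifying them and invoking an off-the-shelf multiscale machine, which is indeed the paper's route (via \cite{Stoll-01}). However, there is a genuine gap in the way you apply the Wegner estimate, and it is exactly the point where the exponent condition $\alpha>\frac{4d}{2-m}$ must do its work.

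You insert $\varepsilon=L^{-m}$ into the modulus of continuity and arrive at a Wegner probability of order $L^{d}(m\ln L)^{-\alpha}$. This quantity does not decay in $L$; it grows like $L^{d}$ up to a logarithmic correction, and hence cannot possibly be ``absorbed into the required probability bound $L_{k+1}^{-2p}$'' as you claim. The confusion is between the energy window of the \emph{initial} estimate (which is the polynomially shrinking interval $\croch{E_0,E_0+\ell^{-m}}$ coming from $(\overline{\mathrm{A}4})$ and Combes--Thomas) and the scale at which the \emph{Wegner} estimate must be applied inside the MSA induction. In the Stollmann/von Dreifus--Klein framework, the Wegner bound is used to control the event that $H_{\Lambda}(\omega)$ has spectrum within an \emph{exponentially} small distance $\exp(-L^{\theta})$ of the reference energy (this is what makes the resolvent iteration close). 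Plugging $\varepsilon=2\exp(-L^{\theta})$ into $s(\varepsilon)\leq(-\ln\varepsilon)^{-\alpha}$ yields $s(\varepsilon)\sim L^{-\theta\alpha}$, which is genuinely polynomial decay; the condition $\alpha>\frac{4d}{2-m}$, together with the choice of $\theta$ slightly below $\beta/2=\frac{2-m}{4}$, is tuned precisely so that $\theta\alpha$ beats the dimension and produces $q>d$. With the polynomial window $L^{-m}$ that you used, no admissible choice of $\alpha$, $\gamma$, $p$ can close the induction, so the argument as written fails. Relatedly, the paper's remark after the theorem points out that a Germinet--Klein bootstrap (which works with polynomial-in-$\varepsilon$ Wegner windows) would require a stronger Wegner estimate than the log-H\"older one available here; so your appeal to GK-style MSA is also not directly applicable, and the correct reference point is the log-H\"older-tolerant formulation of \cite{Stoll-01}.

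A minor additional omission: you should make explicit how $(\overline{\mathrm{A}4})$, which is a probabilistic statement about spectral gaps, is converted into decay of the finite-volume Green's function; the paper does this via a Combes--Thomas estimate, yielding the initial hypothesis $G(I,\ell,\gamma,\xi)$ with $\gamma_{\ell}=\ell^{-m/2}$.
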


\begin{rems*}
\begin{enumerate}[(1)]
\item
Maybe one can strengthen the estimate of Theorem
\ref{thm.localization} in the sense of \cite{GK-01}. Note, however,
that in the latter paper a stronger Wegner estimate is supposed to
hold.
\item  The theorem provides an extension to $d\geq 4$ of  the main result of \cite{BNSS-06}. 
Moreover,  there is no technique at the
moment to include single site distributions as singular as the ones
allowed here in the fractional moment methods. In these aspects, our
result considerably  extends the main result of \cite{BNSS-06}. 
\item
At the same time, the estimates that come out of our analysis are
weaker than those in the latter paper.
\end{enumerate}
\end{rems*}

\begin{proof}[Sketch of the proof]
We use the multiscale setup from \cite{Stoll-01}. By now it is quite
well understood that homogeneity doesn't play a major role so that
multiscale analysis goes through without much alterations if we can
verify the necessary input, i.e., Wegner estimates and initial length
scale estimates.

Let us begin with the latter: Combes\textendash Thomas estimates give
that $(\overline{\mathrm{A}4})$ implies an initial estimate of the
form $G(I,\ell,\gamma,\xi)$ with $\xi$ from
$(\overline{\mathrm{A}4})$,
$I_{\ell}=\croch{E_0,E_0+\frac{1}{2}\,\ell^{-m}}$,
$\gamma_{\ell}=\ell^{-\frac{m}{2}}$ so that the exponent is of the
form $\gamma_{\ell}=\ell^{\beta-1}$ with
$\beta=\beta_m=\frac{2-m}{2}$.

We have to check that an appropriate Wegner estimate is valid as
well, i.e., that, for some $q>d$, $\theta<\frac{\beta}{2}$ we have
that
\[
\prob\bigl\lbrace\dist\bigl(\spec\bigl(H_{\Lambda}(\omega)\bigr),E\bigr)\leq\exp(-L^{\theta})\bigr\rbrace\leq
L^{-q}
\]
for $L$ large enough. We check that
\begin{align*}
\prob\bigl\lbrace\dist\bigl(\spec\bigl(H_{\Lambda}(\omega)\bigr),E\bigr)\leq\exp(-L^{\theta})\bigr\rbrace
&\leq s\bigl(2\exp(-L^{\theta})\bigr)\\
&\leq\bigl\lbrack-\ln\bigl(2\exp(-L^{\theta})\bigr)\bigr\rbrack^{-\alpha}\\
&=(-\ln 2+L^{\theta})^{-\alpha}\\
&\sim L^{-\theta\alpha}
\leq L^{-\theta\frac{4d}{2-m}}\\
&=L^{-(\frac{2-m}{4}+\kappa)\frac{4d}{2-m}}
=L^{-d-\kappa\frac{4d}{2-m}}
\end{align*}
where we have chosen $\theta=\frac{\beta}{2}-x=\frac{2-m}{4}-\kappa$
with positive $\kappa$. Then, the Wegner estimate is fulfilled for
$q=d+\kappa\frac{4d}{2-m}$.

The appropriate $p$ in the strong dynamical localization estimate can
be chosen at most
$\inf\bigl\lbrace\kappa\frac{4d}{2-m},\xi\bigr\rbrace$ with $\xi$
from $(\overline{\mathrm{A}3})$.

An appeal to \cite[Theorems 3.2.2 and 3.4.1]{Stoll-01} gives the result.
\end{proof}


\end{document}